\newtheorem{theorem}{Theorem}
\newtheorem{corollary}{Corollary}
\newtheorem{lemma}{Lemma}
\newtheorem{definition}{Definition}
\newtheorem{conjecture}{Conjecture}
\newtheorem{Remark}{Remark}
\newcommand{\inbrace}[1]{\left \{ #1 \right \}}
\newcommand{\inparen}[1]{\left ( #1 \right )}
\newcommand{\OR}{\textsc{OR}}
\newcommand{\AND}{\textsc{AND}}
\newcommand{\Parity}{\textsc{PARITY}}
\newcommand{\Th}{\textsc{TH}}
\newcommand{\EXACT}{\textsc{EXACT}}
\newcommand{\Mod}{\textsc{MOD}}
\newcommand{\qmod}{\lceil n(1-\frac{1}{m}) \rceil}
\renewcommand{\hat}{\widehat}
\newcommand{\B}{\{0,1\}}
\let\OldLambda\lambda
\let\lambda\relax
\DeclareMathOperator{\lambda}{\OldLambda}
\begin{document}
\begin{frontmatter}
\title{On the exact quantum query complexity of $\Mod_m^n$ and $\EXACT_{k,l}^n$}

\author{Penghui Yao$^{a,b}$, Zekun Ye$^{a}$}
\cortext[one]{ \indent {\it E-mail address:} phyao1985@gmail.com, yezekun@smail.nju.edu.cn}

\address[1]{State Key Laboratory for Novel Software Technology, Nanjing University, Nanjing 210023, China}

\address[2]{Hefei National Laboratory, Hefei {\rm 230088}, China}

\begin{abstract}
The query model has generated considerable interest in both classical and quantum computing communities. Typically, quantum advantages are demonstrated by showcasing a quantum algorithm with a better query complexity compared to its classical counterpart. As an important complexity measure, exact quantum query complexity describes the minimum number of queries required to solve a specific problem exactly using a quantum algorithm. In this paper, we consider the exact quantum query complexity of two symmetric functions: $\Mod_m^n$, which calculates the Hamming weight of an $n$-bit string modulo $m$; $\EXACT_{k,l}^n$, which determines if the Hamming weight of an $n$-bit string is exactly $k$ or $l$. Although these two symmetric functions have received much attention, their exact quantum query complexities have not been fully characterized. Our results are as follows:
\begin{itemize}
\item 
We design an optimal quantum query algorithm to compute $\Mod_m^n$ exactly and thus provide a tight characterization of its exact quantum query complexity. Based on this algorithm, we show the exact quantum query complexity of a broad class of symmetric functions is less than their input size.
\item 
We give a tight characterization of the exact quantum query complexity of 
$\EXACT_{k,l}^n$ for some specific values of $k$ and $l$. 
\end{itemize}

\end{abstract}
\begin{keyword}
query complexity, exact algorithms, quantum computing
\end{keyword}
\end{frontmatter}

\section{Introduction}
The quantum query model is a computational model that describes the power and limitations of quantum algorithms in solving problems in a query-based setting. It has demonstrated the powerful ability of a quantum computer to perform certain computational tasks more efficiently than a classical computer, such as Simon's algorithm \cite{simon1994power} and Shor's integer factorization algorithm \cite{shor1994algorithms}. Moreover, the quantum query model has found applications in a variety of areas, including cryptography \cite{LZ19, YZ21}, optimization \cite{GAW19, CCLW20}, and learning theory \cite{LW19, ACL+21}.

In this paper, we focus primarily on the exact quantum query complexity of symmetric functions.  The exact quantum query complexity is the minimum number of queries required to solve a specific problem exactly using quantum algorithms. As a classical counterpart, the deterministic query complexity is the minimum number of queries required to solve a specific problem with certainty using classical deterministic algorithms. A comprehensive survey on the query complexity can be found in \cite{buhrman2002complexity}. 
Symmetric functions are functions that are invariant under permutations of their inputs, which have a wide range of applications in various fields of computer science such as coding theory and cryptography. A symmetric function is partial if it is defined only on a subset of its domain, otherwise it is total.
\subsection{Related work}
The study of the exact quantum query complexity of partial symmetric functions has a long history. The Deutsch-Jozsa algorithm \cite{DJ92,CEMM98} demonstrated an exponential separation between exact query complexity and deterministic query complexity for the first time. Furthermore, several exact quantum algorithms \cite{Hoy00,Lon01,BHMT02} showed quadratic speedup over classical counterparts for the problem of determining whether the Hamming weight of an $n$-bit string is 0 or 1. Subsequently, Qiu and Zheng \cite{QZ2016,QZ18} determined the exact quantum query complexity and deterministic query complexity of a generalized Deutsch-Jozsa problem. He, Sun, Yang, and Yuan \cite{HSY+18} established an asymptotically optimal bound for the exact quantum query complexity of distinguishing the Hamming weight of an $n$-bit string between $k$ and $l$. Qiu and Zheng \cite{QZ2016,QZ20} studied the exact quantum query complexity for symmetric partial Boolean functions with degrees 1 or 2. In regards to the symmetric functions with large alphabet inputs, Li and Li \cite{LL22} studied the promised element distinctness problem and proposed an optimal exact quantum algorithm.



The exact quantum query complexity of total symmetric functions has also been studied extensively. On the one hand, the best-known exact quantum algorithm for any $n$-bit non-constant symmetric Boolean function requires at least $n/2$ queries. On the other hand, combining the lower bound on the degree of symmetric Boolean functions \cite{vzGR97}, the best-known result about the difference between consecutive primes \cite{BHP01}
with polynomial methods \cite{BBC+01}, it leads to the following conclusion: any exact quantum algorithm for computing any $n$-bit non-constant symmetric Boolean function requires at least $n/2-O(n^{0.525})$ queries. Moreover, Montanaro, Jozsa and Mitchison \cite{MJM15} indicated the exact quantum query complexity of all symmetric Boolean functions on up to 6 bits by numerical results. 
Ambainis, Gruska and Zheng \cite{AGZ15} showed $\AND_n$ is the only $n$-bit Boolean function, up to isomorphism, that requires $n$ quantum queries to compute exactly. 
While the deterministic query complexity of all non-constant total symmetric functions is $n$ \cite{Aaronson03,MJM15,AGZ15},
there are only a few total symmetric functions whose exact quantum query complexity is fully characterized, which are summarized as Table \ref{table_exact} (up to isomorphic). Note that the functions $\neg\OR_n$ and $\AND_n$  are special cases of $\EXACT_k^n$ when $k = 0$ and $n$.

\begin{table}
\caption{The exact quantum query complexity of several symmetric functions}
\label{table_exact}
\begin{center}

\begin{tabular}{|l|p{7cm}|l|}
\hline
Functions & Definition & Exact Quantum Query Complexity \\
\hline
$\Parity_n$ & $\Parity_n(x) = |x| \bmod 2$ & $\lceil n/2 \rceil$ \cite{CEMM98,FGGS1998,BBC+01}\\
\hline
$\EXACT_k^n$ & $
\EXACT_k^n(x) = 
\begin{cases}
1, &\text{if }|x| = k, \\
0, &\text{otherwise}.
\end{cases}
$ & $\max\inbrace{k,n-k}$ \cite{AIS13}\\
\hline
$\Th_k^n$ & $
\Th_k^n(x) = 
\begin{cases}
1, &\text{if }|x| \ge k, \\
0, &\text{otherwise}.
\end{cases}
$ & $\max\inbrace{k,n-k+1}$ \cite{AIS13}\\
\hline
\end{tabular}
\end{center}
\end{table}

There are some total symmetric functions that have been studied, but the exact quantum query complexity is not fully characterized, including $\Mod_m^n$ and $\EXACT_{k,l}^n$. 
Specifically, $\Mod_m^n$ aims to compute the Hamming weight of an $n$-bit string modulo $m$, which is a generalization of $\Parity_n$. Recently, Cornelissen, Mande, Ozols and de Wolf \cite{CMO+21} showed that when the prime factor of $m$ is only 2 or 3, the exact quantum query complexity of $\Mod_m^n$ is $\qmod$. Moreover, they proved the exact quantum query complexity of $\Mod_m^n$ is at least $\qmod$ for any $1 < m \le n$. Then they conjectured the lower bound is tight as Conjecture \ref{con:mod}. Afterward, using variational learning algorithms, Wu, Hou, Zhang, Li and Zeng \cite{WHZ+22} suggested that when $m = n = 5$, there exists a quantum algorithm using 4 queries to compute $\Mod_m^n$, which is consistent with Conjecture \ref{con:mod}.
\begin{conjecture}[\cite{CMO+21}]\label{con:mod}
For $1 < m \le n$, the exact quantum query complexity of $\Mod_m^n$ is $\qmod$.
\end{conjecture}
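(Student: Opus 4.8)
Since the matching lower bound $\qmod$ on the exact quantum query complexity of $\Mod_m^n$ was already established in \cite{CMO+21}, the plan is to construct an exact quantum algorithm attaining it. I would proceed by a block decomposition. Write $n = qm + r$ with $q = \lfloor n/m \rfloor$ and $0 \le r < m$; partition the $n$ input coordinates into $q$ disjoint blocks of size $m$ together with a leftover set of $r$ coordinates; query the $r$ leftover bits one at a time; and run on each block a subroutine that computes the Hamming weight of that block modulo $m$ exactly using only $m-1$ queries. Adding the $q$ block outputs to the $r$ leftover bits modulo $m$ yields $\Mod_m^n(x)$. Running the subroutines one after another on disjoint inputs preserves exactness, and the total query count is $q(m-1) + r = n - \lfloor n/m \rfloor = \qmod$. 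So the whole conjecture reduces to its special case $n = m$: computing $|y| \bmod m$ for $y \in \{0,1\}^m$ exactly with $m-1$ quantum queries.

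The heart of the proof is therefore this $m$-bit subroutine. Note that $|y| \bmod m$ is determined by the tuple $(\EXACT_1^m(y), \dots, \EXACT_{m-1}^m(y))$, of which at most one entry equals $1$, all entries being $0$ exactly when $|y| \in \{0,m\}$ — which is precisely when the answer is $0$. By \cite{AIS13} each $\EXACT_j^m$ individually costs $\max\{j, m-j\} \le m-1$ queries, so the task is to compute all of them simultaneously within the budget of the hardest single one. I would build the subroutine by generalizing the algorithm for $\Parity_n$ (the $m = 2$ instance: one phase-kickback query revealing $y_1 \oplus y_2$) together with the ideas behind the $\EXACT_k^n$ and $\Th_k^n$ algorithms of \cite{AIS13}: keep a work register indexing the block's coordinates, alternate queries with carefully designed coordinate-mixing unitaries so that after $m-1$ queries the state depends on $y$ only through $|y|$, and arrange that the resulting reduced $m$-dimensional family of states (indexed by $|y| \in \{0,\dots,m\}$, with $|y|=0$ and $|y|=m$ mapped to the same answer) is distinguished with certainty by a single final measurement. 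Organizing this as an induction — peeling off one query at a time and reducing to a structurally similar problem on fewer coordinates, or inducting directly on the number of queries for a suitably parametrized family — seems the most promising route, with $\Parity$ as the base case.

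The main obstacle is exactness: every measurement-outcome probability must be identically $0$ or $1$ as a function of $y$, which forces exact orthogonality relations among the states produced after $m-1$ queries and leaves no slack. This is where the arithmetic of $m$-th roots of unity and Chebyshev-type polynomial identities must be made to work out on the nose, and it is the bulk of the technical effort; checking the base cases and verifying that the inductive reduction preserves exactness is the delicate part. A companion observation worth recording is that the easy pairing trick behind $\Parity$ does \emph{not} extend — knowing $y_1 \oplus y_m$ does not pin down $y_1 + y_m \bmod m$ once $m \ge 3$ — so the saved query genuinely comes from the mod-$m$ structure of a whole size-$m$ block rather than from any bitwise combination, which is consistent with the lower bound of \cite{CMO+21} forbidding savings at a finer granularity.
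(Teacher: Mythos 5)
Your outer reduction is exactly the paper's (its Algorithm 2): split the input into $\lfloor n/m\rfloor$ blocks of size $m$ plus $r$ leftover bits, query the leftovers classically, run an $(m-1)$-query exact mod-$m$ subroutine on each block, and add up; the count $\lfloor n/m\rfloor(m-1)+r = n-\lfloor n/m\rfloor = \qmod$ and the appeal to the lower bound of \cite{CMO+21} are fine. But everything then hinges on the subroutine computing $|y|\bmod m$ exactly with $m-1$ queries for $y\in\B^m$, and this you do not construct: you only sketch a hope (``carefully designed coordinate-mixing unitaries,'' an induction generalizing $\Parity$ and the $\EXACT_k^n$/$\Th_k^n$ algorithms of \cite{AIS13}, ``Chebyshev-type identities that must work out on the nose'') and yourself flag that making it exact is the bulk of the effort. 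That subroutine is precisely the new content of the paper, so as written the proposal reduces the conjecture to its hard core and leaves that core unproved. Note also that your suggested framing via the tuple $(\EXACT_1^m,\dots,\EXACT_{m-1}^m)$, computed ``simultaneously within the budget of the hardest single one,'' comes with no mechanism for why such simultaneous exact computation is possible; nothing in \cite{AIS13} gives it.

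For comparison, the paper's subroutine (Algorithm 1) is a single explicit, non-recursive circuit rather than an induction. Prepare $\frac{1}{\sqrt{m}}\sum_{j}\ket{j}\ket{0}$; for $a=1,\dots,m-1$ increment the second register and apply the phase oracle conjugated by the Fourier transform and the cyclic shift by $a$, so that the $a$-th query imprints phase $a\theta x_{j-a}$ on index $j$ (with $\theta=2\pi/m$). Summing over $a$, index $j$ accumulates $\sum_{l\in S_x}(j-l)\theta = jk\theta+\eta_x$ where $k=|x|\bmod m$, so the final state is $e^{i\eta_x}\ket{\phi_k}\ket{m-1}$, a Fourier basis state up to global phase, and measuring in the Fourier basis $\inbrace{\ket{\phi_j}}$ yields $k$ with certainty. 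Exactness is thus immediate from orthogonality of the Fourier states, with no root-of-unity or Chebyshev gymnastics and no case analysis. If you want to complete your proposal, you should either reproduce a construction of this type or give a genuinely worked-out inductive argument; the current text does neither.
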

For the function $\EXACT_{k,l}^n$ $(k < l)$, which aims to determine whether $|x| \in \inbrace{k,l}$, Ambainis, Iraids and Nagaj \cite{AIN17} gave the best-known result: the exact quantum query complexity of $\EXACT_{k,l}^n$ falls within a range of $\max\inbrace{{n-k,l}-1}$ to $\max\inbrace{{n-k,l}+1}$. Moreover, they showed if $l-k \in \{2,3\}$, the lower bound is tight and proposed Conjecture \ref{con:kl}.
For $\EXACT_{k,l}^n$, Wu et al. \cite{WHZ+22} also gave the numerical result about some instances of small sizes. For the case $l-k = 2$, $n$ is even and $k,l$ is symmetrically distributed around $n/2$, the numerical result is consistent with Conjecture \ref{con:kl}.
\begin{conjecture}[\cite{AIN17}]\label{con:kl}
If $l-k \ge 2$, the exact quantum query complexity of $\EXACT_{k,l}^n$ is $\max\{n-k,l\}-1$.
\end{conjecture}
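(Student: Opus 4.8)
Conjecture~\ref{con:kl} is really a statement about the upper bound: the matching lower bound $\max\{n-k,l\}-1$ is already due to Ambainis, Iraids and Nagaj~\cite{AIN17}, so for whichever values of $k,l$ are targeted one only has to exhibit an exact quantum algorithm using $\max\{n-k,l\}-1$ queries. Using $\EXACT_{k,l}^n(x)=\EXACT_{n-l,\,n-k}^n(\bar x)$, which leaves $\max\{n-k,l\}$ unchanged, I would first normalise to $k+l\le n$; then $\max\{n-k,l\}=n-k$, the target is $n-k-1$, and since $k<l$ forces $k<n/2$ the budget comfortably accommodates a full round of pair-queries, $n-k-1\ge\lfloor n/2\rfloor$.

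The first route I would try is a reduction to the optimal $\Mod_m^n$ algorithm from the first half of this paper. Whenever $\{k,l\}$ is exactly the set of integers in $\{0,\dots,n\}$ congruent to $k$ modulo $m:=l-k$ --- which happens precisely when $k<m$ and $l+m>n$ --- knowing $|x|\bmod m$ already decides whether $|x|\in\{k,l\}$, so $\EXACT_{k,l}^n$ is solved within the $\lceil n(1-1/m)\rceil$ queries of the $\Mod_m^n$ algorithm, and it remains only to check the arithmetic identity $\lceil n(1-1/m)\rceil=\max\{n-k,l\}-1$ in this range. This already covers, for example, every $\EXACT_{0,l}^n$ with $n/2<l\le n$, where $\lceil n(1-1/l)\rceil=n-1$ meets the target.

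For the remaining $(k,l)$ I would use a recursive ``halving'' algorithm: pair the $n$ bits, spend $\lfloor n/2\rfloor$ queries computing every XOR $x_{2i-1}\oplus x_{2i}$ by the one-query Deutsch--Jozsa gadget, and fold an odd leftover bit into the next stage. If $u$ pairs come out \emph{unequal} they contribute a known $u$ to $|x|$, while each of the $\lfloor n/2\rfloor-u$ equal pairs contributes $0$ or $2$; so deciding $|x|\in\{k,l\}$ reduces to a fresh instance on the $\lfloor n/2\rfloor-u$ ``pair-value'' bits, again of $\EXACT_{k',l'}$ form with $k',l'$ obtained from $k,l,u$ by subtracting and halving (degenerating to an $\EXACT_j$ or $\Th_j$ instance from Table~\ref{table_exact}, or to a constant, when a parity or range condition fails). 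Recursing, with a base case near $k'=0$ handled by the same scheme, finishes the construction provided the count closes: for every admissible $u$ one needs $\lfloor n/2\rfloor+\bigl(\max\{(\lfloor n/2\rfloor-u)-k',\,l'\}-1\bigr)\le n-k-1$, with the appropriate bookkeeping for the leftover bit.

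Verifying that last inequality uniformly in $u$ --- and confronting the fact that it can simply fail, as one sees already for $\EXACT_{1,4}^6$, where naive halving costs $5$ while the $\Mod_3^6$ reduction achieves the optimal $4$ --- is the step I expect to be the main obstacle. The ``$-1$'' in $\max\{n-k,l\}-1$ is a global saving that refuses to distribute across a naive peeling: removing one bit or one pair leaves a residual $\EXACT_{k-1,l-1}$- or $\EXACT_{k-2,l-2}$-type problem whose own target has not dropped, so one is forced to balance the cost of a whole round against the shrinkage of the sub-instance, and this balance seems to hold only for restricted families --- $k=0$ or $l=n$, the residue-class cases feeding the $\Mod$ reduction, and $k,l$ placed symmetrically about $n/2$. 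This is precisely the ``specific values'' scope announced in the abstract; pushing the analysis to all $k,l$ with $l-k\ge2$ appears to require a genuinely new idea, perhaps an algorithm that adaptively interleaves residue-detection and range-detection while sharing queries between the two.
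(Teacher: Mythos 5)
Keep in mind what the benchmark here is: the statement you were given is a conjecture of \cite{AIN17} which the paper itself does \emph{not} prove in general; its actual contribution (Theorem \ref{th:kl}, via Lemmas \ref{lemma:klzero} and \ref{lemma:klone}) is to verify it for the two families $k=0$ (all $2\le l\le n$) and $(k,l)=(1,n-1)$, the lower bound being taken from \cite{AIN17}. So your closing assessment --- that only restricted families seem reachable and the general case needs a new idea --- is accurate and matches the paper's scope. Measured against the paper's partial results, however, your proposal contains one concretely false step and covers strictly less. The false step is the claimed ``arithmetic identity'' $\lceil n(1-1/m)\rceil=\max\{n-k,l\}-1$ on the residue-class range $k<m$, $l+m>n$ with $m=l-k$: it is not an identity. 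Take $(n,k,l)=(7,2,5)$, $m=3$: the set $\{2,5\}$ is exactly the residue class of $2$ modulo $3$ inside $\{0,\dots,7\}$, yet the $\Mod_3^7$ reduction costs $\lceil 14/3\rceil=5$ queries while the target (and, by \cite{AIN17}, the true complexity, since $l-k=3$) is $4$. So the plain mod reduction settles only those residue-class cases where the arithmetic happens to work out (your $k=0$, $l>n/2$ family, and e.g.\ $\EXACT_{1,4}^6$, are correct instances), not the whole range you delimit.

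On coverage: for $k=0$ with $2\le l\le n/2$ --- which Lemma \ref{lemma:klzero} handles in full --- neither of your tools applies (the mod reduction fails because $2l\le n$ keeps $2l$ in the residue class, and you concede the halving does not close). The paper's algorithm there is an adaptive hybrid you are missing: classically query $x_0,\dots,x_{n-l-1}$ until a $1$ is found at some index $i$, then run the optimal $\EXACT_{l-1}^{n-i-1}$ algorithm of \cite{AIS13} on the suffix; if no $1$ is found, run Algorithm \ref{algorithm1} modulo $l$ on the last $l$ bits --- in all branches at most $n-1$ queries. For $(k,l)=(1,n-1)$, the paper's Lemma \ref{lemma:klone} does not halve all at once but peels off one pair per round and maintains a \emph{majority-index} witness: the recursive call decides $|x'|\in\{1,n-3\}$ and returns a majority index $i$, after which a single comparison query $x_0\oplus x'_i$ resolves whether $|x|\in\{1,n-1\}$ or $\{3,n-3\}$. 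That witness-passing is exactly the bookkeeping your all-at-once halving leaves unverified (in particular for odd $n$ and the leftover bit), and it is what makes the count close at $n-2$ with base cases $n=4,5$. Your $\EXACT_{1,4}^6$ counterexample to naive halving is correct, and correctly signals why a uniform halving cannot be the whole story.
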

\subsection{Our contribution}
In this paper, we consider the above two conjectures and study the exact quantum query complexity of $\Mod_m^n$ and $\EXACT_{k,l}^n$. Our motivation is as follows:
\begin{itemize}
\item The exact quantum query complexity of $\Mod_m^n$ and $\EXACT_{k,l}^n$ are not fully characterized. Thus, we aim to improve the best-known result of these two functions.
\item In the quantum model, we say a function is \emph{evasive} if its exact quantum query complexity equals its input size. $\Mod_m^n$ is a key function to analyze the quantum evasiveness of the symmetric functions with large alphabet output. By studying the exact quantum query complexity of $\Mod_m^n$, we can better understand the quantum evasiveness of a broad class of symmetric functions.
\item At present, there are quite a few exact quantum algorithm design techniques. It is interesting to obtain more exact quantum algorithm design paradigms.
\end{itemize}
Our contribution is as follows:
i) We propose an optimal quantum query algorithm to compute $\Mod_m^n$ exactly and thus prove Conjecture \ref{con:mod}. 
Compared to the algorithm proposed in \cite{CMO+21}, our algorithm is more natural and suitable for any $1 < m \le n$. As a corollary, we prove a wide range of symmetric functions is not evasive in the quantum model based on the above algorithm.
ii) We prove Conjecture \ref{con:kl} for the case $k=0$ and $k=1,l=n-1$. Thus, we 
give a tighter characterization to the exact quantum query complexity of $\EXACT_{k,l}^n$.

 \subsection{Organization}
The remainder of the paper is organized as follows. In Section \ref{sec:pre}, we review some definitions and notations used in this paper. In Section \ref{sec:mod}, we give an optimal exact quantum query algorithm to compute $\Mod_m^n$ and analyze the quantum evasiveness of a broad class of symmetric functions. In Section \ref{sec:kl}, we discuss the exact quantum query complexity of $\EXACT_{k,l}^n$. Finally, a conclusion is made in Section \ref{sec:con}.

\section{Preliminary}\label{sec:pre}
This section first gives some formal definitions of the quantum query model. For convenience, for an $n$-bit Boolean string $x \in \B^n$, we let $x= x_0\cdots x_{n-1}$. For positive integer $n$, let $[n] = \inbrace{1,\dots,n}$ and $\mathbb{Z}_n = \inbrace{0,1,...,n-1}$.
\begin{definition}[POVM \cite{NC15}]
A set of operators $\inbrace{E_j}$ is a POVM (Positive Operator-Valued Measure) if each operator $E_j$ is positive and $\sum_j E_j = I$. If a measurement  described by $\inbrace{E_j}$ is performed upon a quantum state $\ket{\psi}$, then the probability of obtaining outcome $j$ is given by $p(j) = \langle \psi | E_j | \psi \rangle$.
\end{definition}
\begin{definition}[Quantum query algorithms]
\label{def:qqa}
A quantum query algorithm $\mathcal{A}$ consists of an initial  state $\ket{\psi_0}$, a unitary operator sequence $U_TO_xU_{T-1}O_x\cdots O_xU_0$ and a POVM $\{E_j\}$, where $U_i$'s are fixed unitary operators, and $O_x$ is a quantum query oracle dependent on $x \in \{0,1\}^n$, which is defined as $O_x \ket{i}\ket{b} = \ket {i}\ket{b + x_i}$, where $i \in \{0,...,n-1\}$, $b \in \mathbb{Z}_n$. The algorithm process is as follows:
\begin{itemize}
    \item Prepare the initial state $\ket{\psi_0}$;
    \item Perform unitary operations $U_0,O_x,...,O_x,U_T$ sequentially on $\ket{\Psi_0}$ to obtain the quantum state $\ket{\Psi_x} = U_TO_xU_{T-1}O_x\cdots O_xU_0 \ket{\Psi_0}$;
\item Perform the measurement described by $\{E_j\}$ upon the quantum state $\ket{\Psi_x}$, 
use the measurement result as the output $\mathcal{A}(x)$ of the algorithm.
\end{itemize}
The query complexity of the algorithm is defined as the number of query oracle $O_x$ used in the algorithm.
\end{definition}
\begin{Remark}
	The quantum query oracle $O_x$ is the linear extension of the reversible  mapping $(i,b) \rightarrow (i,b+x_i)$, which can be performed efficiently given the ability to compute $i\rightarrow x_i$ efficiently \cite{Childs2021Lecture}.
\end{Remark}

As mentioned in \cite{AIS13}, a quantum algorithm can also be described as a recursive algorithm with the following structure:
First, perform unitary operation $U_1O_xU_0$ and measure; second, depending on the measurement result, call a smaller instance of the algorithm. Such a recursive algorithm can be transformed into a quantum query algorithm described as Definition \ref{def:qqa} with the same query complexity.
\begin{definition}[Exact quantum algorithms]
Given function $f:\B^n \rightarrow X$, where $X$ is a finite set, if a quantum algorithm $\mathcal{A}$ satisfies $\mathcal{A}(x) = f(x)$ for any $x \in \B^n$, then $\mathcal{A}$ is an exact quantum algorithm to compute $f$.
\end{definition}

\begin{definition}[Exact quantum query complexity]
For function $f:\B^n \rightarrow X$, where $X$ is a finite set. The exact quantum query complexity of $f$, $Q_E(f)$, is the minimal number of queries an exact quantum algorithm requires to compute $f$. 
\end{definition}

\begin{definition}[Univariate version of symmetric functions]
For a symmetric function $f: \B^n \rightarrow X$, where $X$ is a finite set, we define $F: \inbrace{0,1,...,n} \rightarrow X$ as $F(x) = f(|x|)$ for any $x \in \B^n$, where $|x|$ is the Hamming weight of $x$, i.e., the number of $1$'s in $x$. 
\end{definition}

\begin{definition}[Majority index]
Suppose $x \in \B^n$ and $|x| \neq \frac{n}{2}$, we say $i$ is a majority index of $x$ if i) $|x| >  \frac{n}{2}$ and $x_i = 1$, or ii) $|x| < \frac{n}{2}$ and $x_i = 0$.
\end{definition}
Next, we give some notations used in this paper. Let $\theta = 2\pi/n$,
\begin{equation}
	F_n = \frac{1}{\sqrt{n}}\sum_{j,k\in\mathbb{Z}_n}e^{ijk\theta}\ket{j}\bra{k},
\end{equation}
and
\begin{equation}\label{eq:hatO}
\hat{O}_x = (I \otimes F_n) O_x (I \otimes F^{\dag}_n). 
\end{equation}
By the phase kickback trick \cite{Childs2021Lecture}, we have
\[
\hat{O}_x\ket{j,b} = e^{ib\theta x_j}\ket{j,b},
\]
where $j,b \in \mathbb{Z}_n$.
For any $i,a\in \mathbb{Z}_n$, the permutation operation $U_a$ is defined as 
\begin{equation}\label{eq:Ua}
	U_a\ket{i} = \ket{i+a}.
\end{equation}
Let 
\begin{equation}\label{eq:oracle}
O_{x,a} = (U_{a} \otimes I)\hat{O}_x (U_{-a} \otimes I).
\end{equation}
Then 
$$
O_{x,a}\ket{j,b} = e^{ib\theta x_{j-a}}\ket{j,b},
$$
where $j, b \in \mathbb{Z}_n$, and the subtraction operator is with modulo $n$.
For $k \in \{0,...,n-1\}$, let 
\[
\ket{\phi_{k}} = \frac{1}{\sqrt{n}}\sum_{j = 0}^{n-1}e^{ijk\theta} \ket{j}.
\]
Then for any $k,l \in \{0,...,n-1\}$ and $k\neq l$, 
\[
\langle \phi_k | \phi_l \rangle = \frac{1}{n}\sum_{j=0}^{n-1}e^{ij(k-l)\theta}= 0.
\]
As a result, $\inbrace{\ket{\phi_0},...,\ket{\phi_{n-1}}}$ is an orthonormal basis. 
Let 
\begin{equation}\label{eq:Pj}
	P_j =
\ket{\phi_j}\bra{\phi_j} \text{ for $j \in \inbrace{0,...,n-1}$.}
\end{equation}
Then for any $j \in \inbrace{0,...,n-1}$, $P_j$ is a projection operator and $\sum_{j=0}^{n-1} P_j = I$. Thus, $\inbrace{P_j}$ is a POVM.

\section{Computing the Hamming weight modulo $m$}\label{sec:mod}
In this section, we present an optimal algorithm to compute $\Mod_m^n$, which is defined as $\Mod_m^n(x) = |x| \bmod m$ for any $x \in \B^n$. First, we give Algorithm \ref{algorithm1} to compute $\Mod_n^n$, where $\theta = 2\pi/n$. We verify the correctness of Algorithm \ref{algorithm1} as follows. For any $x\in \{0,1\}^n$, let $S_x = \{j:x_j = 1\}$. 
If $j-a = l$ for some $l \in S_x$, then $a\theta x_{j-a} = (j-l)\theta$; otherwise, $a\theta x_{j-a} = 0$. Thus, we have 
\begin{equation}\label{eqSx1}
	\sum_{a=1}^{n-1}a\theta x_{j-a} = \sum_{a=0}^{n-1}a\theta x_{j-a} = \sum_{l \in S_x}(j-l)\theta.
\end{equation}
If $|x| \bmod m = k$, then 
\begin{equation}\label{eqSx2}
	\sum_{l \in S_x}(j-l)\theta = jk\theta+\eta_x,
\end{equation}
where $\eta_x$ only depends on $x$ and $\eta_x = -\sum_{l \in S_x} l\theta$.
By \Cref{eqSx1,eqSx2}, for the final state $\ket{\psi_x}$ of Algorithm \ref{algorithm1}, we have
$$
\begin{aligned}
\ket{\psi_x} &= \frac{1}{\sqrt{n}}\sum_{j = 0}^{n-1} \exp\inparen{{i\sum_{a=1}^{n-1}a\theta x_{j-a}}}\ket{j}\ket{n-1}\\
&= e^{i\eta_x}\frac{1}{\sqrt{n}}\sum_{j = 0}^{n-1} e^{ijk\theta} \ket{j}\ket{n-1}\\
&= e^{i\eta_x}\ket{\phi_k}\ket{n-1}.
\end{aligned}
$$
Thus, the algorithm will output $k$ with the probability 1 after performing the measurement described by $\{P_j\}$ as \Cref{eq:Pj}, i.e., Algorithm \ref{algorithm1} always outputs the correct result.
\begin{algorithm}
			\caption{Compute $\Mod_n^n$.}
			\label{algorithm1}
			\KwIn{$x\in\B^n$;}
			\KwOut{$|x| \bmod n$.}
			\begin{enumerate}
				\item Prepare initial state $\ket{\psi_0} = \frac{1}{\sqrt{n}}\sum_{j = 0}^{n-1} \ket{j}\ket{0}$. 
				\item For $a = 1$ to $n-1$, perform the following two operations sequentially: i) perform $U_1$ as \Cref{eq:Ua}  in the second register; ii) perform $O_{x,a}$ as \Cref{eq:oracle} in the whole registers. Finally, the final state is 
				$$
				\ket{\psi_x} = \frac{1}{\sqrt{n}}\sum_{j = 0}^{n-1} \exp\inparen{i\sum_{a=1}^{n-1}a\theta x_{j-a}}\ket{j}\ket{n-1}.
				$$
				\item Perform the measurement described by $\inbrace{P_j}$ defined in Equation (\ref{eq:Pj}) upon the first register of quantum state $\ket{\psi_x}$, and then output the measurement result. 
			\end{enumerate}
	\end{algorithm}
Since we have $O_{x,a} = (U_{a} \otimes F_{n})O_x (U_{-a} \otimes F^{\dag}_{n})$ as  \Cref{eq:hatO,eq:oracle}, the number of query oracle $O_x$ used in Algorithm \ref{algorithm1} is $n-1$. 


Next, for $1 < m < n$, let $c = \lfloor\frac{n}{m}\rfloor$ and $n = cm+q$. Then $0 \le q < m$. 
We give Algorithm \ref{algorithm2} to compute $|x| \bmod m$. The algorithm procedure is as follows. i) If $q = 0$, we partition $x$ into $m$-bit substrings $x^{(0)},\dots,x^{(c-1)}$. For any $0 \le i \le c-1$, we compute $b_i = |x^{(i)}| \bmod m$ by Algorithm \ref{algorithm1}. Finally, we output $(\sum_{i=0}^{c-1} b_i) \bmod m$. ii) If $q \neq 0$, we partition $x$ into $c$ $m$-bit substrings $\inbrace{x^{(0)},\dots,x^{(c-1)}}$ and one $q$-bit substring $x^{(c)}$. For $0 \le i \le c-1$, we compute $b_i = |x^{(i)}| \bmod m$ by Algorithm \ref{algorithm1}. Then we query all the elements in $x^{(c)}$ and compute $b_c = |x^{(c)}| \bmod m$. Finally, we output $\inparen{\sum_{i=0}^c b_i} \bmod m$. We verify the correctness of Algorithm \ref{algorithm2}. If $q = 0$, then 
$$
\begin{aligned}
|x| \bmod m &= \inparen{\sum_{i=0}^{c-1} |x^{(i)}|} \bmod m \\
&= \inparen{\sum_{i=0}^{c-1} \inparen{|x^{(i)}| \bmod m}} \bmod m \\
&= \inparen{\sum_{i=0}^{c-1} b_i} \bmod m.
\end{aligned}
$$
If $q > 0$, then we have $|x| \bmod m = \inparen{\sum_{i=0}^{c} b_i} \bmod m$ similarly. Thus, Algorithm \ref{algorithm2} always gives the correct output. Moreover, the number of queries in Algorithm is $c(m-1)+q = n-c = \lceil n(1-\frac{1}{m})\rceil$.
	
\begin{algorithm}
\caption{Compute $\Mod_m^n$.}
\label{algorithm2}
			\KwIn{$x\in\B^n$, integers $m,c,q$ such that $1 < m < n$, $c = \lceil \frac{n}{m} \rceil$ and $n = cm+q$;}
			\KwOut{$|x| \bmod m$.}
\For{$i = 0 \to c-1$}{
 Let $x^{(i)} = x_{im}\cdots x_{(i+1)m-1}$\;
 Compute $b_i = |x^{(i)}| \bmod m$ by Algorithm \ref{algorithm1}\;
}
\eIf{$q=0$}{\Return $(\sum_{i=0}^{c-1} b_i) \bmod m$;}{
Let $x^{(c)} = x_{cm}\cdots x_{n-1}$\;
 Query all the elements in $x^{(c)}$ and let $b_c = |x^{(c)}|$\;
 \Return $\inparen{\sum_{i=0}^c b_i} \bmod m$;
}
\end{algorithm}

The above results 
implies the following theorem: 
\begin{theorem}\label{th:mod}
For $1 < m \le n$, there exists an exact quantum query algorithm to compute $\Mod_m^n$ using $\qmod$ queries.
\end{theorem}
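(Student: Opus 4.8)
The plan is to prove the theorem by assembling the two constructions presented above — Algorithm \ref{algorithm1} for the boundary case $m=n$ and Algorithm \ref{algorithm2} for $1<m<n$ — and then checking that in both cases the query count is exactly $\qmod$.

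For $m=n$ I would invoke Algorithm \ref{algorithm1} directly. The point to verify is that the interleaved sequence of input-independent shifts $U_1$ and shifted phase oracles $O_{x,a}$ produces, on the basis state $\ket{j}$ of the first register, precisely the accumulated phase $\sum_{a=1}^{n-1} a\theta x_{j-a}$, which by \Cref{eqSx1,eqSx2} equals $jk\theta + \eta_x$ with $k=|x|\bmod n$ and $\eta_x$ a $j$-independent (hence global) phase. Consequently the final state is $e^{i\eta_x}\ket{\phi_k}\ket{n-1}$, and since $\{\ket{\phi_0},\dots,\ket{\phi_{n-1}}\}$ is orthonormal, the POVM $\{P_j\}$ of \Cref{eq:Pj} returns $k$ with certainty. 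The query count is $n-1$: the permutations $U_a$ cost nothing, and $O_{x,a}=(U_a\otimes F_n)O_x(U_{-a}\otimes F_n^{\dagger})$ makes exactly one call to $O_x$, once for each $a\in\{1,\dots,n-1\}$. Since $\qmod = \lceil n(1-\tfrac1n)\rceil = n-1$, this matches.

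For $1<m<n$ I would use the block decomposition of Algorithm \ref{algorithm2}. Write $c=\lfloor n/m\rfloor$ and $n=cm+q$ with $0\le q<m$, and partition $x$ into the $m$-bit blocks $x^{(0)},\dots,x^{(c-1)}$ together with, when $q>0$, a remainder block $x^{(c)}$ of length $q$. Running Algorithm \ref{algorithm1} on each length-$m$ block computes $b_i=|x^{(i)}|\bmod m$ exactly using $m-1$ queries, while querying all $q$ bits of $x^{(c)}$ gives $b_c=|x^{(c)}|$ with $q$ queries; since $|x|\bmod m=(\sum_i b_i)\bmod m$, the output is exact. The total is $c(m-1)+q = cm-c+q = n-c$, and as $\lceil n(1-\tfrac1m)\rceil = n-\lfloor n/m\rfloor = n-c$, this again equals $\qmod$.

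The bulk of the work therefore lies not in deducing the theorem from the stated algorithms — that step is immediate — but in the design of Algorithm \ref{algorithm1}: the main obstacle is engineering the phase pattern so that, after $n-1$ queries, each amplitude $\ket{j}$ carries exactly the linear phase $jk\theta$ up to a single $j$-independent factor, which is what makes the residue $k$ perfectly distinguishable by a \emph{fixed} orthonormal measurement. The telescoping identity $\sum_{a}a\theta x_{j-a}=\sum_{l\in S_x}(j-l)\theta$ is the crux; once it is in hand, correctness and the query bound follow by the routine casework above. One should also confirm the off-by-one in the remainder cases $q=0$ versus $q>0$ and the ceiling identity $\lceil n(1-\tfrac1m)\rceil = n-\lfloor n/m\rfloor$ for all $1<m\le n$.
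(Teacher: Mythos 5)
Your proposal is correct and follows essentially the same route as the paper: Algorithm \ref{algorithm1} with its phase-accumulation analysis (via \Cref{eqSx1,eqSx2}) handles $m=n$ with $n-1$ queries, and the block decomposition of Algorithm \ref{algorithm2} with $c=\lfloor n/m\rfloor$ blocks plus a directly queried $q$-bit remainder gives $c(m-1)+q=n-c=\qmod$ queries for $1<m<n$. Your additional check of the identity $\lceil n(1-\tfrac1m)\rceil=n-\lfloor n/m\rfloor$ is a harmless bit of extra care; nothing essential differs from the paper's argument.
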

Since Cornelissen et al. \cite{CMO+21} showed that any quantum algorithm needs at least $\lceil n(1-\frac{1}{m})\rceil$ queries to compute $|x| \bmod m$ exactly\footnote{While  \cite{CMO+21} uses a slightly different quantum query oracle, it is not hard to check their proof of lower bound also works for our oracle $O_x$.}, our algorithm is optimal and Conjecture \ref{con:mod} is proved. As an implication of Theorem \ref{th:mod}, we show the following corollary:
\begin{corollary}\label{th:lessn}
For any symmetric functions $f:\B^n \rightarrow X$, where $X$ is a finite set, let $F(|x|) = f(x)$ for any $x \in \B^n$. If there exists $k \in [n]$ such that $F(0) = F(k)$ and $F(n-k) = F(n)$, then the exact quantum query complexity of $f$ is less than $n$. Moreover, the upper bound is tight, i.e., there exists a symmetric function $f$ satisfying the above conditions whose exact quantum query complexity is $n-1$.
\end{corollary}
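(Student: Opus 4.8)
The plan is to reduce the computation of $f$ to the computation of a modular function on a suitably chosen sub-instance, together with at most one extra query, and then invoke Theorem \ref{th:mod}. Suppose $k \in [n]$ satisfies $F(0) = F(k)$ and $F(n-k) = F(n)$. The key observation is that the value $F(j)$ of a symmetric function is determined by $|x| \bmod m$ for an appropriate $m$ whenever $|x|$ is confined to a window of $m$ consecutive integers; the hypothesis lets us collapse the two ``extreme'' windows $[0,k]$ and $[n-k,n]$ onto residues we have already seen. Concretely, I would first query a single coordinate $x_0$. If $x_0 = 1$, the remaining $n-1$ bits $x_1 \cdots x_{n-1}$ have Hamming weight $|x| - 1 \in \{0, \dots, n-1\}$, and $F(x)$ is recovered from $|x_1\cdots x_{n-1}|$; if $x_0 = 0$, similarly $F(x)$ is recovered from $|x_1 \cdots x_{n-1}| = |x| \in \{0,\dots,n-1\}$. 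In either branch we have an $(n-1)$-bit symmetric-function computation left, so it suffices to show that any $(n-1)$-bit symmetric function can be computed with $n-1$ queries — but that is trivially true by querying everything, so this naive splitting is not enough.

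Instead, the right approach is to use the $\Mod$ structure directly. Take $m = n - k + 1$ if $k \le n/2$ (so $n - k \ge n/2$), or handle the symmetric case by complementation. Partition $[n]$ so that Algorithm \ref{algorithm1}/\ref{algorithm2}-style subroutines compute $|x| \bmod m$ using $\qmod = \lceil n(1 - \tfrac{1}{m}) \rceil < n$ queries. The point is that knowing $r := |x| \bmod m$ pins down $F(x)$ uniquely: the possible values of $|x|$ consistent with residue $r$ form an arithmetic progression $r, r+m, r+2m, \dots$ inside $\{0,\dots,n\}$, and I must argue that $F$ is constant on each such progression. Here is where the hypothesis $F(0) = F(k)$, $F(n-k) = F(n)$ enters: with $m$ chosen as above, each residue class either lies entirely in the ``safe'' middle range where $F$ happens to be constant, or its only two members are $\{0,k\}$-type or $\{n-k,n\}$-type pairs forced equal by hypothesis. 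I would verify this case analysis carefully — splitting on the parity/size of $k$ relative to $n/2$ and checking the boundary residues — as this is the technical heart of the argument. Once it holds, running Algorithm \ref{algorithm2} with this $m$ and then applying $F$ to the output gives an exact algorithm for $f$ with fewer than $n$ queries.

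For the tightness claim, I would exhibit a single explicit $f$ meeting the hypotheses with $Q_E(f) = n - 1$. A natural candidate is $\Parity$-like: take $F(j) = j \bmod 2$ restricted so that the endpoint conditions are met trivially (e.g. $F(0) = F(n)$ forces $n$ even, and one picks $k$ accordingly), or more cleanly take $f$ with $F(0)=F(1)=0$ and $F(j)$ otherwise chosen to realize a function of degree $n-1$. Concretely, for $n$ odd I would consider $F(j) = 1$ iff $j \in \{0, 1\}$ — no, better: pick $k=1$, so the conditions become $F(0)=F(1)$ and $F(n-1)=F(n)$; then let $F$ be the indicator of $\{j : 2 \le j \le n-2,\ j \text{ even}\}$ or any function whose approximate polynomial degree is $n-1$. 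The lower bound $Q_E(f) \ge \lceil \deg(f)/2 \rceil$ combined with a degree computation, or direct comparison to a known evasive-up-to-one function, yields $Q_E(f) = n-1$; matching it against the upper bound just proved closes the corollary. I expect the main obstacle to be the residue-class case analysis in the second paragraph — ensuring the chosen modulus $m$ simultaneously makes every arithmetic progression of step $m$ in $\{0,\dots,n\}$ land on $F$-constant sets, using \emph{only} the two given equalities — and getting the endpoint indices to align requires care with floors and ceilings.
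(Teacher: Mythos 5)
Your second paragraph contains the critical gap. You pick a single modulus $m$ (e.g.\ $m=n-k+1$) and claim that knowing $r=|x|\bmod m$ pins down $F(x)$ because each residue class inside $\{0,\dots,n\}$ either lies in a ``safe'' middle range where $F$ is constant or consists of one of the pairs $\{0,k\}$, $\{n-k,n\}$. But the hypothesis gives only the two equalities $F(0)=F(k)$ and $F(n-k)=F(n)$; it says nothing about $F$ on $\{1,\dots,n-1\}$, and since $X$ is an arbitrary finite set, $F$ may be injective there. To beat $n$ queries you need $m\le n$, and then some residue class necessarily contains two distinct weights other than those two pairs: the only modulus making $0\equiv k$ and $n-k\equiv n$ is $m=k$ (with $2k>n$), and then every residue $r\in\{1,\dots,n-k-1\}$ puts the two middle weights $r$ and $r+k$ in the same class, where $F$ need not agree. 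So no single $\Mod_m^n$ computation determines $f$ in general, and the ``residue-class case analysis'' you flag as the technical heart cannot be carried out. The paper's route is different and works: split $x$ into two blocks $x'\in\B^{n-k}$ and $x''\in\B^{k}$, compute $a=|x'|\bmod(n-k)$ and $b=|x''|\bmod k$ by Algorithm \ref{algorithm1} using $(n-k-1)+(k-1)=n-2$ queries, and spend at most one more query on a single bit. If $a\neq 0$ and $b\neq 0$ then $|x|=a+b$ exactly; if exactly one of $a,b$ is $0$, one query of a bit of the ambiguous block (all of whose bits are equal, since its weight is $0$ or the block length) resolves $|x|$; only when $a=b=0$ does a residual ambiguity remain, and after querying $x_0$ it is exactly between $\{0,k\}$ or between $\{n-k,n\}$ --- precisely the pairs the hypothesis forces $F$ to identify. (The case $k=n$ is handled by Algorithm \ref{algorithm1} alone.)

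Your tightness argument is also insufficient: the polynomial-method bound gives only $Q_E(f)\ge\deg(f)/2\approx n/2$, nowhere near $n-1$, and your candidate functions come with no proof of an $n-1$ lower bound. The paper simply takes $f=\Mod_n^n$, which satisfies the hypothesis with $k=n$ and has exact quantum query complexity exactly $n-1$ by Theorem \ref{th:mod} combined with the lower bound of Cornelissen et al.
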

\begin{proof}
If $k = n$, we compute $a = |x| \bmod n$ using $n-1$ quantum queries by Algorithm \ref{algorithm1} and then $f(x) = F(a)$. If $k \in \inbrace{1,...,n-1}$, we give Algorithm \ref{al:lessn} to compute $f$. The algorithm procedure is as follows. First, we partition $x$ into two substrings $x' \in \B^{n-k}$ and $x'' \in \B^k$. Then we compute $a = |x'| \bmod (n-k)$ and $b = |x''| \bmod k$ by Algorithm \ref{algorithm1}. Then we discuss the following cases:
\begin{itemize}
    \item If $a\neq 0, b \neq 0$, then $|x| = a+b$.
    \item If $a \neq 0,b=0$, then we query $x_{n-k}$ to determine $|x''| = 0$ or $k$, and thus determine $|x|$.
    \item If $a=0,b \neq 0$, the case is similar to the above case.
    \item  if $a=0,b=0$, then we query $x_{0}$. If $x_0 = 0$, then $|x| = 0$ or $k$; if $x_0 = 1$, then $|x| = n-k$ or $n$.
\end{itemize}

\begin{algorithm}
\caption{Compute $f$} 
\label{al:lessn}
\SetKwInOut{Input}{Input}
\SetKwInOut{Output}{Output}
\KwIn{$x \in \B^n$, a symmetric function $f:\B^n \rightarrow \B$ such that $F(0) = F(k)$ and $F(n-k) = F(n)$ for some $k \in [n]$, where $F$ is the univariate version of $f$;}
\Output{$f(x)$.}

Let $x' = x_0\cdots x_{n-k-1},x'' = x_{n-k}\cdots x_{n-1}$;

Compute $a = |x'| \bmod (n-k)$ and $b = |x''| \bmod k$ using Algorithm \ref{algorithm1};

\Switch{$(a, b)$}{
\Case{$(a \neq 0, b \neq 0)$}{
$f(x) = F(a+b)$;
}
\Case{$(a \neq 0, b = 0)$} {
Query $x_{n-k}$\;
\lIf{$x_{n-k} = 0$}{$f(x) = F(a)$}\lElse{$f(x) = F(a + k)$}
}
\Case{$(a = 0, b \neq 0)$} {
Query $x_0$\;
\lIf{$x_{0} = 0$}{$f(x) = F(b)$}\lElse{$f(x) = F(n-k+b)$}
}
\Case{$(a = b = 0)$} {
Query $x_0$\;
\lIf{$x_{0} = 0$}{$f(x) = F(0)$}\lElse{$f(x) = F(n-k)$}
}
}
\end{algorithm}

The correctness of Algorithm \ref{al:lessn} is not hard to verify. Moreover, the number of queries of the algorithm is at most $(n-k-1)+(k-1)+1 = n-1$, i.e., $Q_E(f) \le n-1$. Since $Q_E(\Mod_n^n) = n-1$, the above bound is tight. 

\end{proof}

Furthermore, Ambainis et al. \cite{AGZ15} proved that the exact quantum query complexity of a total symmetric Boolean function $f:\{0,1\}^n \rightarrow \{0,1\}$ is $n$ if and only if $f$ is isomorphic to $\AND_{n}$ function. Correspondingly, we conjecture there exists a generalized characterization to all total symmetric functions $f:\{0,1\}^n \rightarrow X$, where $X$ is a finite set. 
Thus we give the following conjecture:

\begin{conjecture}\label{con1}
Given a total symmetric function $f:\{0,1\}^n \rightarrow X$, where $X$ is a finite set. Let $F$ be the univariate version of $f$. Then the exact quantum query complexity of $f$ is $n$ if and only if one of the following conditions satisfies:
\begin{itemize}
    \item [i)] $F(0) \neq F(i)$ for any $i \in [n]$; 
     \item [ii)] $F(n) \neq F(i)$ for any $i \in \inbrace{0,\dots,n-1}$. 
\end{itemize}
\end{conjecture}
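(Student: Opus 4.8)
The plan is to establish the two implications separately, since they are of completely different character. The ``if'' direction---conditions (i) or (ii) force $Q_E(f)=n$---is a short reduction to functions already known to be evasive. Suppose (i) holds, i.e.\ $F(0)\neq F(w)$ for every $w\in[n]$; then $|x|=0$ is the only Hamming weight that $f$ sends to the value $F(0)$, so applying the post-processing ``output $1$ iff $f(x)=F(0)$, else output $0$'' to any exact quantum algorithm for $f$ yields an exact quantum algorithm, with the same number of queries, for $\EXACT_0^n=\neg\OR_n$. Hence $Q_E(f)\ge Q_E(\EXACT_0^n)=\max\{0,n\}=n$ by the bound of \cite{AIS13} recorded in Table~\ref{table_exact}, and since $Q_E(f)\le n$ trivially we get equality. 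Condition (ii) is handled symmetrically, reducing instead to $\EXACT_n^n=\AND_n$, whose exact quantum query complexity is also $n$.

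The ``only if'' direction is the substantial part; equivalently, I must show that if $F(0)=F(i)$ for some $i\in[n]$ and $F(n)=F(j)$ for some $j\in\{0,\dots,n-1\}$, then $Q_E(f)\le n-1$. Two sub-cases are immediate from Corollary~\ref{th:lessn}: if $F(0)=F(n)$ (in particular if $i=n$ or $j=0$) apply the corollary with $k=n$, and if $i+j=n$ apply it with $k=i$, since then $F(0)=F(k)$ and $F(n-k)=F(j)=F(n)$. The crux is the remaining case $1\le i,j\le n-1$ with $i+j\neq n$ and $F(0)\neq F(n)$: one needs an $(n-1)$-query algorithm that may confuse weights inside $\{0,i\}$ and inside $\{j,n\}$ but must otherwise pin down $|x|$. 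The natural attempt---partition $x$ into blocks, compute $|x^{(r)}|\bmod m_r$ on each block by Algorithm~\ref{algorithm1}, and spend the queries thereby saved on disambiguation---runs into trouble precisely in the branch where every residue vanishes: there the set of consistent Hamming weights is of the form $\{0,p,n-p,n\}$ (for a two-block split into sizes $p$ and $n-p$), and one further query can only separate $\{0,p\}$ from $\{n-p,n\}$ or $\{0,n-p\}$ from $\{p,n\}$, each of which is compatible with the level sets of $F$ only when $i+j=n$. So a genuinely new idea is needed here---for instance combining two modular computations $|x|\bmod m_1$ and $|x|\bmod m_2$ with the $m_r$ chosen (coprime to $n$, or with prescribed gcd) so as to kill this ambiguity, or an adaptive/recursive scheme in the spirit of the $\EXACT_k^n$ algorithm of \cite{AIS13} and of the constructive half of \cite{AGZ15}.

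The main obstacle is therefore exactly this last case of the ``only if'' direction. That it is genuinely hard is to be expected: for $X=\{0,1\}$ the conjecture specializes to the Ambainis--Gruska--Zheng theorem that, up to isomorphism, $\AND_n$ is the unique evasive symmetric Boolean function, and the ``only if'' direction is precisely the algorithmic (constructive) half of that theorem; the passage to arbitrary finite $X$ is complicated by the far richer level-set structure of $F$, which makes both the algorithm and its correctness analysis considerably more delicate. In practice I would first try to push the block-partition algorithm by also extracting partial Hamming-weight information from blocks whose sizes are coprime to $i$ and to $j$; if that cannot be made to work, I would instead look for an adversary-type lower bound exhibiting an $f$ with non-isolated $F(0)$ and $F(n)$ but ``generic'' $(i,j)$ for which $Q_E(f)=n$ nonetheless, which would indicate that the conjecture as stated should be refined. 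Either way, I do not see a routine route to the full statement.
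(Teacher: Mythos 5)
The statement you were asked to prove is not proved in the paper at all: it is stated there as Conjecture \ref{con1}, and the paper establishes exactly what you establish --- the ``if'' direction by observing that condition i) or ii) lets any exact algorithm for $f$ also compute $\neg\OR_n$ or $\AND_n$, and the subcase $i+j=n$ of the ``only if'' direction via Corollary \ref{th:lessn} --- while the remaining case $i+j\neq n$ is left open as Conjecture \ref{con2}. Your proposal therefore matches the paper's treatment essentially step for step, and your identification of that residual case as the genuinely open problem (with the Boolean case corresponding to the constructive half of the Ambainis--Gruska--Zheng result) is precisely the paper's own assessment.
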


Suppose a function $f$ satisfies item i). Then for any $x$, if there exists an algorithm to compute $f(x)$, then the algorithm also can compute $\AND_n(x)$, and thus  $Q_E(f) \ge Q_E(\AND_n)$. Similarly, if $f$ satisfies item ii), then $Q_E(f) \ge Q_E(\OR_n)$.
Thus, $Q_E(f) = n$. As a result, to solve the above conjecture, we only need to solve the following question: if there exist $i \in [n],j\in\inbrace{0,...,n-1}$ such that $F(0) = F(i)$ and $F(n) = F(j)$, whether the exact quantum query complexity of $f$ is less than $n$? By Corollary \ref{th:lessn}, we have already proven that if $i+j = n$, then the exact quantum query complexity of $f$ is less than $n$. Thus, we propose the following conjecture:
\begin{conjecture}\label{con2}
If a total symmetric function $f:\B^n \rightarrow X$ satisfies $F(0) = F(i)$, $F(j) = F(n)$ for some $i \in [n],j \in \inbrace {0,...,n-1}$ such that $i+j \neq n$, then $Q_E(f) < n$, where $F$ is the univariate version of $f$.
\end{conjecture}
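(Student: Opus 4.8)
The plan is to push the block-decomposition idea behind Corollary~\ref{th:lessn} as far as it will go, with an induction on $n$, and then to pinpoint the genuinely new difficulty. Begin with normalisations. If $i=j$ then $F(0)=F(i)=F(j)=F(n)$, so $F(0)=F(n)$ and Corollary~\ref{th:lessn} applies with $k=n$; the same happens if $i=n$ or $j=0$; so assume $1\le i\le n-1$, $1\le j\le n-1$ and $i\ne j$. Replacing $f$ by its bit-complement $\bar f(x)=f(\bar x)$, which has the same exact quantum query complexity and univariate version $\bar F(w)=F(n-w)$, turns $(i,j)$ into $(n-j,n-i)$ and swaps $i+j<n$ with $i+j>n$ while preserving $i+j\ne n$; so we may assume $i+j<n$, hence also $n-i-j\ge 1$. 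The idea throughout is to cut $x$ into blocks, compute the Hamming weight of each block modulo its length via Algorithm~\ref{algorithm1}, and use the queries that remain to clean up the ``saturated'' blocks — those whose weight modulo the length is $0$, hence equal to $0$ or to the full length — with the two collisions $F(0)=F(i)$ and $F(j)=F(n)$ absorbing the residual ambiguity at the two ends.

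For the inductive step, suppose $\gcd(i,j,n)$ has a prime factor $p$. Cut $x$ into $n/p$ blocks of length $p$ and compute $b_k=|x^{(k)}|\bmod p$ by Algorithm~\ref{algorithm1}, which costs $(n/p)(p-1)=n-n/p$ queries and leaves $n/p-1$ in reserve. Call block $k$ \emph{open} if $b_k=0$, so that $|x^{(k)}|\in\{0,p\}$ and one query to any of its bits pins it down. If at most $n/p-1$ blocks are open, resolve them with the reserved queries, recover $|x|$, and output $F(|x|)$. If all $n/p$ blocks are open, then $|x|=p\cdot|\epsilon|$ where $\epsilon_k=\one{x^{(k)}=1^p}$, and it remains to compute the symmetric $(n/p)$-bit function $G(\epsilon)=F(p|\epsilon|)$ with the $n/p-1$ reserved queries; but its univariate version $\tilde F(t)=F(pt)$ satisfies $\tilde F(0)=F(i)=\tilde F(i/p)$, $\tilde F(n/p)=F(j)=\tilde F(j/p)$ and $i/p+j/p\ne n/p$, so $G$ is again an instance of Conjecture~\ref{con2}, now on $n/p<n$ bits, and $Q_E(G)\le n/p-1$ by induction. (The cases $n\le 3$ are vacuous: every instance satisfying the hypothesis is already covered by Corollary~\ref{th:lessn}.) This reduces everything to the coprime case $\gcd(i,j,n)=1$, which in particular covers every prime $n$.

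The coprime case is where I expect \textbf{the main obstacle}. Applying the same scheme with blocks of lengths $i$, $n-i-j$, $j$ leaves a derived three-bit function $G(\epsilon)=F(i\epsilon_1+(n-i-j)\epsilon_2+j\epsilon_3)$, carrying only the two forced collisions $G(1,0,0)=F(i)=F(0)=G(0,0,0)$ and $G(0,0,1)=F(j)=F(n)=G(1,1,1)$, to be computed with two queries; more generally an $r$-block split leaves an $r$-bit derived function to be computed with $r-1$ queries, so one must prove such derived functions are not evasive. The trouble is that a derived function is in general not symmetric, its only guaranteed structure is those two collisions, and a short case check shows that no ``query one coordinate, then recurse'' strategy works in the worst case — already the restriction of $f$ to a single coordinate can be evasive, computing $\AND_{n-1}$ or $\neg\OR_{n-1}$. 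Beating this seems to require genuinely quantum subroutines for these small extremal functions, in the spirit of the $\EXACT_k^n$ and $\Th_k^n$ algorithms, or an adaptation of the argument by which Ambainis, Gruska and Zheng characterised the evasive symmetric Boolean functions; on the lower-bound side there is nothing to do, since in the representative hard instances $f$ computes a boundary case of $\EXACT$ or $\Th$ simply by relabelling its output, so $Q_E(f)\ge n-1$ there and the conjectured bound would be tight.
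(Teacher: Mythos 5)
You are attempting to prove what the paper itself states only as an open conjecture (Conjecture~\ref{con2}); the paper proves no such statement, its only established case being $i+j=n$ via Corollary~\ref{th:lessn}. So the question is whether your argument closes the conjecture on its own, and it does not: you explicitly stop at the case $\gcd(i,j,n)=1$, declaring it ``the main obstacle,'' and that case is the entire substance of the conjecture. Your gcd reduction only postpones it --- it says nothing whenever $n$ is prime, and even when $\gcd(i,j,n)>1$ the induction bottoms out at a coprime instance on $n/p$ bits for which you supply no algorithm. An argument whose base case is an admitted open problem is not a proof, so the statement remains unestablished.

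That said, the partial reductions you do give are sound and go slightly beyond what the paper proves. The normalisations (the degenerate cases $i=j$, $i=n$, $j=0$ collapse to $F(0)=F(n)$, and complementation lets you assume $i+j<n$) are correct, and the prime-block step checks out: computing $|x^{(k)}|\bmod p$ for all $n/p$ blocks by Algorithm~\ref{algorithm1} costs $n-n/p$ queries; a nonzero residue pins down the block weight exactly since $0\le|x^{(k)}|\le p$; an ``open'' block is constant, so one query resolves it; and in the all-open branch the induced function on $\B^{n/p}$ is again an instance of Conjecture~\ref{con2} (with $i/p+j/p\neq n/p$), whose oracle is simulable on the promise by querying one bit per block. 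The failure point is exactly where you place it: after any $r$-block split in the coprime case, the derived $r$-bit function is no longer symmetric, carries only the two collisions $G(\text{all-}0\text{ up to the first block})=G(0\cdots0)$ and $G(\text{complement})=G(1\cdots1)$, and must be computed with $r-1$ queries; neither Corollary~\ref{th:lessn} nor the $\Mod$ machinery applies to it, and, as your own case analysis notes, its single-coordinate restrictions can be of $\AND$/$\neg\OR$ type, so no classical-style recursion can work. A genuinely new quantum ingredient (or an extension of the Ambainis--Gruska--Zheng characterisation to non-symmetric derived functions, or to non-Boolean range) is needed there, and without it the conjecture is untouched.
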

If Conjecture \ref{con2} is proved, then Conjecture \ref{con1} is also correct.

\section{Exact Quantum Query Complexity of $\EXACT_{k,l}^n$}\label{sec:kl}
In this section, we consider the exact quantum query complexity of $\EXACT_{k,l}^n$ for $l-k \ge 2$. The $n$-bit Boolean function $\EXACT_{k,l}^n$ is defined as follows:
$$
\EXACT_{k,l}^n(x) = 
\begin{cases}
1, &\text{if }|x| \in \{k,l\}, \\
0, &\text{otherwise}.
\end{cases}
$$
In the following context, we need to use the $n$-bit Boolean function $\EXACT_k^n$, defined as
$$
\EXACT_{k}^n(x) = 
\begin{cases}
1, &\text{if }|x| = k, \\
0, &\text{otherwise}.
\end{cases}
$$
First, we consider the case $k = 0$. We give the following lemma:
\begin{lemma}\label{lemma:klzero}
For $x \in \B^n$ and $2 \le l \le n$, there exists a quantum algorithm to compute $\EXACT_{0,l}^n(x)$ with $n-1$ queries.
\end{lemma}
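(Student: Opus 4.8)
The plan is to prove the lemma by induction on $n$, treating $l$ as a fixed parameter with $2 \le l \le n$: the algorithm peels off one bit at a time and falls back on Algorithm \ref{algorithm1} once the remaining instance has $l$ equal to its input length.

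First I would dispose of the base case $l = n$. Here $\EXACT_{0,n}^n(x) = 1$ exactly when $x$ is all-zeros or all-ones, i.e. when $|x| \in \{0,n\}$, and since $|x| \in \{0,1,\dots,n\}$ this holds iff $|x| \bmod n = 0$. So I would run Algorithm \ref{algorithm1} on all of $x$ to obtain $|x| \bmod n$ with $n-1$ queries and output $1$ iff the result is $0$. (This case is also an instance of Corollary \ref{th:lessn}.)

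For the inductive step, assume $2 \le l \le n-1$ and that the lemma holds for all smaller input sizes. The algorithm first queries the single bit $x_{n-1}$. If $x_{n-1} = 0$, then $|x| = |x_0\cdots x_{n-2}|$, so $\EXACT_{0,l}^n(x) = \EXACT_{0,l}^{n-1}(x_0\cdots x_{n-2})$, and since $2 \le l \le n-1$ the induction hypothesis supplies an exact algorithm for this using $n-2$ queries. If $x_{n-1} = 1$, then $|x| \ge 1$, so $\EXACT_{0,l}^n(x) = 1$ iff $|x| = l$ iff $|x_0\cdots x_{n-2}| = l-1$; that is, $\EXACT_{0,l}^n(x) = \EXACT_{l-1}^{n-1}(x_0\cdots x_{n-2})$, which by the algorithm of \cite{AIS13} costs $\max\{l-1,\,(n-1)-(l-1)\} = \max\{l-1,\,n-l\}$ queries. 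Because $l \ge 2$ we have $n-l \le n-2$, and because $l \le n-1$ we have $l-1 \le n-2$, so this branch also uses at most $n-2$ queries. Either way the total is at most $1 + (n-2) = n-1$, and by the fact recorded after Definition \ref{def:qqa} (following \cite{AIS13}) that a ``query one bit, then branch into a smaller instance'' recursive procedure compiles into an ordinary quantum query algorithm with the same query count, this yields the desired $(n-1)$-query exact algorithm. The recursion terminates because $n$ decreases by one at each step while $l$ stays fixed, so eventually $l$ equals the input length and the base case applies.

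The one thing requiring care is the query budget in the inductive step, and this is precisely where the hypotheses $l \ge 2$ and $l \le n$ enter. If $l$ were $1$, the $x_{n-1}=1$ branch would reduce to $\EXACT_0^{n-1} = \neg\OR_{n-1}$, costing $n-1$ queries and pushing the total to $n$; the assumption $l \ge 2$ rules this out. Symmetrically, the case $l = n$ cannot use the recursion at all — its $x_{n-1}=0$ branch would face $\EXACT_0^{n-1}$ at cost $n-1$ — which is why that case is handled directly by Algorithm \ref{algorithm1}. Everything else (correctness of the case split, the legitimacy of invoking \cite{AIS13} with parameters $(n-1,\,l-1)$, and the output bookkeeping) is routine.
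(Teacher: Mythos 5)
Your proof is correct and, once the recursion is unrolled, it is essentially the paper's own algorithm: query bits one at a time (you scan from the right, the paper from the left), reduce to $\EXACT_{l-1}$ via \cite{AIS13} as soon as a $1$ is found, and fall back on Algorithm \ref{algorithm1} (Hamming weight mod $l$) on the last $l$ untouched bits if only zeros are seen, with the same query accounting. The only difference is presentational — induction on $n$ versus an explicit sequential loop — so the two arguments coincide in substance.
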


\begin{proof}
If $l < n$, we provide the algorithm as follows. For $i=0$ to $n-l-1$, we query $x_i$ until $x_i = 1$ for some $i$. Then we consider the following two cases:
i) If we find the smallest integer $i \in [0,n-l-1]$ such that $x_i = 1$, let 
$x' = x_{i+1}\cdots x_{n-1}$. Then $\EXACT_{0,l}^n(x) = \EXACT_{l-1}^{n-i-1}(x')$. Since $\EXACT_{l-1}^{n-i-1}(x')$ can be computed by $\max\inbrace{n-i-l,l-1}$ quantum queries \cite{AIS13}, the total number of queries is
$$
\begin{aligned}
(i+1)+ \max\inbrace{n-i-l,l-1} &\le  \max\inbrace{n-l,i+l} \\
&\le \max\inbrace{n-l,n-1} \\
&= n-1.
\end{aligned}
$$
ii) If we find $x_i = 0$ for any $0 \le i \le n-l-1$, let $x' = x_{n-l} \cdots x_{n-1}$ and compute $|x'| \mod l$ using Algorithm \ref{algorithm1}. If $|x'| \mod l =0$, then $|x| = 0$ or $l$, and thus $\EXACT_{0,l}^n(x) = 1$; otherwise, $\EXACT_{0,l}^n(x) = 0$. The total number of queries is $n-l+l-1 = n-1$.

If $l = n$, we compute $|x| \bmod n$ using $n-1$ quantum queries by Algorithm \ref{algorithm1}, and then $|x|\in \inbrace{0,l}$ if and only if $|x|\bmod n = 0$. 
\end{proof}
Second, we consider the case $k=1$ and $l =n-1$. We give the following lemma. \begin{lemma}\label{lemma:klone}
For $x\in\B^n$ and $n \ge 4$, there exists a quantum algorithm to compute $\EXACT_{1,n-1}^n(x)$ with $n-2$ queries.
\end{lemma}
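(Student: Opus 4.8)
The plan is to reduce $\EXACT_{1,n-1}^n$ to a smaller instance after a constant number of queries, mirroring the structure of Lemma~\ref{lemma:klzero} but being more economical. First I would query two input bits, say $x_0$ and $x_1$, and branch on what we see. If $x_0 \neq x_1$ (the pair reads $01$ or $10$), then one bit of the remaining $n-2$ bits is ``used up'': writing $x' = x_2\cdots x_{n-1} \in \B^{n-2}$, we have $|x| = 1 + |x'|$, so $|x| \in \{1, n-1\}$ iff $|x'| \in \{0, n-2\}$, i.e. $\EXACT_{1,n-1}^n(x) = \EXACT_{0,n-2}^{n-2}(x')$. By Lemma~\ref{lemma:klzero} this costs $n-3$ queries, for a total of $2 + (n-3) = n-1$; to save the extra query I would instead query only $x_0$ first and, if $x_0 = 1$, note $|x|\in\{1,n-1\}$ iff $|x_1\cdots x_{n-1}|\in\{0,n-2\}$, which is $\EXACT_{0,n-2}^{n-1}(x_1\cdots x_{n-1})$ computable in $n-2$ queries by Lemma~\ref{lemma:klzero}, total $1+(n-2)=n-1$ — still one too many, so the branching must be arranged more carefully.

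The key realization is that the two ``hard'' Hamming weights $1$ and $n-1$ are symmetric about $n/2$, so a single query to $x_0$ splits the problem into two sub-instances that are each essentially an $\EXACT_{k'}$-type problem on $n-1$ bits, and $\EXACT_{k'}^{n-1}$ costs only $\max\{k', n-1-k'\}$ queries. Concretely, after querying $x_0$: if $x_0 = 0$ then $|x| \in \{1,n-1\}$ forces $|x|=1$ (since $|x|=n-1$ with $x_0=0$ would need all other $n-1$ bits to be $1$, giving weight $n-1$ — wait, that is possible). Let me instead observe $|x| = |x_1\cdots x_{n-1}|$ when $x_0=0$, so we need $|x'|\in\{1,n-1\}$ with $x'\in\B^{n-1}$; but $|x'|=n-1$ means $x'$ is all ones, so the target set on $n-1$ bits is $\{1,n-1\}$ again, not obviously cheaper. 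The genuine saving comes from combining the $x_0=0$ and $x_0=1$ branches: in the $x_0=1$ branch the target becomes $\{0,n-2\}$ on $n-1$ bits. So I would run, \emph{before} any query, the $n-2$-query algorithm obtained by noticing that $\EXACT_{1,n-1}^n$ restricted appropriately is computable by first applying one query's worth of amplitude-amplification-style reduction; more precisely I expect the intended argument queries $x_0$, then in each branch invokes the $\max\{k,n-k\}$-query exact algorithm for $\EXACT_k^{n-1}$ from \cite{AIS13} with $k\in\{1,n-2\}$, giving $\max\{1,n-2\}=n-2$ queries in the $x_0=0$ branch and $\max\{0,n-2\}=n-2$ in the $x_0=1$ branch, but since these $n-2$ queries in the two branches can reuse the same $n-1$ physical bits and we must also account for the query to $x_0$, one has to be cleverer: run the recursive-algorithm framework of \cite{AIS13} where the query to $x_0$ is \emph{part of} the $n-2$ total, using that $\EXACT_k^{n-1}$ with $k\le 1$ or $k\ge n-2$ needs one fewer query when the first query already distinguishes the extreme cases.

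The cleanest route, which I would pursue, is: query $x_0$; if $x_0=1$, compute $\EXACT_{0,n-2}^{n-1}(x_1\cdots x_{n-1})$, which by the proof of Lemma~\ref{lemma:klzero} (the $l=n-1<n$ case applied to an $(n-1)$-bit string with $l$ replaced by $n-2$) uses at most $(n-1)-1 = n-2$ queries — but we already spent one, so this gives $n-1$. To actually hit $n-2$, the trick must be that one does \emph{not} query $x_0$ separately: instead one directly computes $|x|\bmod(n-1)$ on an $(n-1)$-bit sub-block using Algorithm~\ref{algorithm1} with $n-2$ queries, handles the at-most-one leftover bit by case analysis, and uses that $\{1,n-1\}$ collapses modulo $n-1$ to $\{1,0\}$, which together with one classical bit of side information pins down $|x|$. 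I would therefore partition $x = y \cdot z$ with $|y|=n-1$, $z = x_{n-1}$ a single bit, compute $a = |y|\bmod(n-1)$ via Algorithm~\ref{algorithm1} ($n-2$ queries), and argue: if $z=0$ then $|x|=|y|\in\{1,n-1\}$ iff $a\in\{1,0\}$ \emph{and} the correct one of the two preimages holds — here $a=0$ means $|y|\in\{0,n-1\}$, only $|y|=n-1$ is in range, while $a=1$ means $|y|\in\{1,n\}$ hence $|y|=1$; so $z=0$ needs no extra query. If $z=1$ then $|x|=|y|+1\in\{1,n-1\}$ iff $|y|\in\{0,n-2\}$ iff $a\in\{0,n-2\}$, and again $a=0\Rightarrow|y|=0$ (not $n-1$, impossible to confuse since $|y|\le n-1$ and $a=n-1\equiv 0$ would mean $|y|=n-1$... so $a=0$ is ambiguous between $|y|=0$ and $|y|=n-1$) — this ambiguity is the main obstacle, and I expect to resolve it by choosing \emph{which} bit to leave out adaptively, or by observing that the $n-2$ queries of Algorithm~\ref{algorithm1} can be spent to compute $|y|\bmod(n-1)$ on an $(n-1)$-block whose complement bit we \emph{do} know cheaply because the two target weights differ by $n-2\equiv -1$, so a single modular residue plus parity-of-block-size reasoning suffices. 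The hard part will be handling the boundary ambiguity ($a=0$ confusing weight $0$ with weight $n-1$) cleanly within a budget of $n-2$ queries; I anticipate the fix is to run Algorithm~\ref{algorithm1} on an $(n-1)$-bit block, and in the single ambiguous residue class use the one remaining ``free'' query (we have spent $n-2$ and are allowed $n-2$, so actually none is free) — hence the real argument must avoid the ambiguity structurally, most likely by the $\EXACT_k^{n-1}$-based branching of \cite{AIS13} where querying the first bit is counted inside the $\max\{k,n-1-k\}$ bound, giving $n-2$ overall since $\max\{1,n-2\} = n-2$ and the extreme value $k=0$ or $k=n-1$ in the recursion costs one less than the generic $\max$.
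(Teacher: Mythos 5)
There is a genuine gap: your proposal explores several reductions but never arrives at an algorithm that actually runs in $n-2$ queries. Every concrete route you write down costs $n-1$ queries (query $x_0$, then solve $\EXACT_{0,n-2}^{n-1}$ on the rest via Lemma~\ref{lemma:klzero}), or founders on the ambiguity you yourself identify: computing $|y| \bmod (n-1)$ on an $(n-1)$-bit block cannot distinguish $|y|=0$ from $|y|=n-1$, and you have no query left to resolve it. The closing suggestion that the fix is to absorb the query to $x_0$ into the $\max\inbrace{k,n-1-k}$ bound of \cite{AIS13}, or that $\EXACT_k^{n-1}$ with extreme $k$ ``needs one fewer query,'' is unsubstantiated and does not hold as stated: $\EXACT_1^{n-1}$ costs $\max\inbrace{1,n-2}=n-2$ queries on its own, so prefixing a separate query to $x_0$ still gives $n-1$. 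So the proof is incomplete; it is a survey of attempts plus a conjecture about what ``the real argument must'' do.

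The missing idea in the paper's proof is a recursion with a strengthened invariant, powered by the fact that one query buys the parity of \emph{two} bits. For $n>5$, one query yields $x_0\oplus x_1$. If $x_0\neq x_1$, then $|x|\in\inbrace{1,n-1}$ iff $|x'|\in\inbrace{0,n-2}$ for $x'=x_2\cdots x_{n-1}$, which Algorithm~\ref{algorithm1} decides with $n-3$ queries, total $n-2$. If $x_0=x_1$, then $|x|\in\inbrace{1,n-1}$ forces $|x'|\in\inbrace{1,n-3}$, so one recurses on the $(n-2)$-bit instance $\EXACT_{1,n-3}^{n-2}$ \emph{with the extra guarantee that, on a yes-instance, the recursive call also returns a majority index $i$ of $x'$}; then a single further query to $x_0\oplus x'_i$ disambiguates (e.g. $|x'|=1$ with $x_0=1$ gives $|x|=3\notin\inbrace{1,n-1}$), for a total of $1+(n-4)+1=n-2$. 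Without carrying the majority-index witness through the recursion, the final disambiguation cannot be done in one query, which is exactly the wall your mod-$(n-1)$ attempt hits. The base cases $n=4$ (two parity queries $x_0\oplus x_1$, $x_2\oplus x_3$) and $n=5$ (the 3-query algorithm of \cite{AIN17}, which exhibits a pair $x_i\neq x_j$ and hence majority indices) also need separate treatment, which your proposal does not address.
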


    \begin{proof}
    We give a recursive algorithm as follows. The goal of the algorithm is to determine whether $|x| \in \inbrace{1,n-1}$. If $|x| \in \inbrace{1,n-1}$, the algorithm finds at least a majority index of $x$. 
    \begin{itemize}
        \item If $n = 4$, we compute $x_0 \oplus x_1$ and $x_2 \oplus x_3$ using 2 quantum queries by Algorithm \ref{algorithm1}. i) If $x_0 \oplus x_1 = 0, x_2 \oplus x_3 = 1$, then $|x| \in \inbrace{1,3}$ and $\inbrace{0,1}$ are majority indices of $x$; ii) if $x_0 \oplus x_1 = 1, x_2 \oplus x_3 = 0$, then $|x| \in \inbrace{1,3}$ and $\inbrace{2,3}$ are majority indices of $x$ similarly; iii) if $x_0 \oplus x_1 =  x_2 \oplus x_3$, then $|x| \in \inbrace{0,2,4}$.
    
    \item If $n= 5$, then there exists an algorithm to determine whether $|x| \in \inbrace{1,4}$ using 3 quantum queries \cite{AIN17}. It is worth noting if $|x| \in \inbrace{1,4}$, the algorithm will find some $i,j$ such that $x_i \neq x_j$. Thus, all the indices except $i,j$ are the majority indices of $x$.
    
    \item If $n > 5$, let $x' = x_2\cdots x_{n-1}$. We compute $x_0 \oplus x_1$ using 1 quantum query first. i) If $x_0 \neq x_1$, 
    we compute $|x'| \bmod n-2$ using Algorithm \ref{algorithm1} to determine whether $|x'| \in \inbrace{0,n-2}$. If $|x'| \notin \inbrace{0,n-2}$, then $|x| \notin \inbrace{1,n-1}$; if $|x'| \in \inbrace{0,n-2}$, then $|x| \in \inbrace{1,n-1}$ and $\inbrace{2,...,n-1}$ are majority indices of $x$.
    ii) If $x_0  = x_1$, we call the algorithm recursively to determine whether $|x'| \in \inbrace{1,n-3}$ and find a majority index $i$ in $x'$ if $|x'| \in \inbrace{1,n-3}$. Then we discuss the following two cases:
    \begin{enumerate}
        \item  If $|x'| \notin \inbrace{1,n-3}$, we have
        $|x'| \in \inbrace{0,2,...,n-4,n-2}$. Since $x_0 = x_1$ and $|x| = x_0+x_1+|x'|$, we have $|x| \in \inbrace{0,2,...,n-2,n}$. Thus, $|x| \notin \inbrace{1,n-1}$;
        \item  If $|x'| \in \inbrace{1,n-3}$, we compute $x_0 \oplus x'_i$ using 1 quantum query. If $x_0 \neq  x'_i$, then $|x| \in \inbrace{3,n-3}$ and thus $|x| \notin \inbrace{1,n-1}$; if $x_0 = x'_i$, then $|x| \in \inbrace{1,n-1}$.
    \end{enumerate}
 
    \end{itemize}
    Next, we prove the number of queries in the above algorithm to compute $\EXACT_{1,n-1}^n(x)$ is at most $n-2$
    by the induction method. i) For $n=4$ and $5$, the correctness of the proposition is easy to check; ii) We suppose the proposition is correct for any $n$ such that $4 \le n < m$ for some integer $m \ge 6$. Then we aim to prove the correctness of the proposition in the case $n = m$. If $x_0 \neq x_1$, the number of queries in the algorithm is $1+(m-2-1)= m-2$; if $x_0 = x_1$, then by the induction assumption, the number of queries in the algorithm is at most $1+(m-2-2)+1 = m-2$. Thus, the proposition is also correct for $n = m$. 

    \end{proof}
Combining Lemma \ref{lemma:klzero}, \ref{lemma:klone} and $Q_E(\EXACT_{k,l}^n) \ge \max\inbrace{n-k,l}-1$ \cite{AIN17}, we prove Theorem \ref{th:kl}, which implies the correctness of Conjecture \ref{con:kl} for the above two cases.

\begin{theorem}\label{th:kl}
If $l-k \ge 2$, then the exact quantum query complexity of $\EXACT_{k,l}^n$ is $\max\inbrace{n-k,l}-1$ for the case $k=0$ and the case $k=1,l=n-1$. 
\end{theorem}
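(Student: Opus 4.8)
The plan is to combine the two lemmas already proven with the matching lower bound from the literature. The statement of Theorem~\ref{th:kl} asserts that $Q_E(\EXACT_{k,l}^n) = \max\{n-k,l\}-1$ in the two regimes $k=0$ (any $l$ with $2\le l\le n$) and $k=1,\,l=n-1$, and in both of these the conjectured value simplifies: when $k=0$ we have $\max\{n-k,l\}-1 = \max\{n,l\}-1 = n-1$ since $l\le n$, and when $k=1,l=n-1$ we have $\max\{n-1,n-1\}-1 = n-2$. So the theorem reduces to two separate two-sided bounds, and in each case the upper bound is exactly what Lemma~\ref{lemma:klzero} and Lemma~\ref{lemma:klone} supply, respectively.

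First I would handle the case $k=0$. The upper bound $Q_E(\EXACT_{0,l}^n)\le n-1$ is precisely Lemma~\ref{lemma:klzero}. For the lower bound, I would invoke the bound $Q_E(\EXACT_{k,l}^n)\ge \max\{n-k,l\}-1$ from Ambainis, Iraids and Nagaj~\cite{AIN17}; instantiating it at $k=0$ gives $Q_E(\EXACT_{0,l}^n)\ge \max\{n,l\}-1 = n-1$ (using $l\le n$). The two bounds meet, so $Q_E(\EXACT_{0,l}^n)=n-1=\max\{n-k,l\}-1$. Since Lemma~\ref{lemma:klzero} requires only $2\le l\le n$, i.e. $l-k=l\ge 2$, this covers the whole stated range. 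One should also note the degenerate possibility $l=n$ separately only insofar as the simplification $\max\{n,l\}=n$ still holds; no extra argument is needed.

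Second I would handle the case $k=1,\,l=n-1$ (with $n\ge 4$, matching the hypothesis of Lemma~\ref{lemma:klone}). The upper bound $Q_E(\EXACT_{1,n-1}^n)\le n-2$ is exactly Lemma~\ref{lemma:klone}. For the lower bound, again apply $Q_E(\EXACT_{k,l}^n)\ge\max\{n-k,l\}-1$ from~\cite{AIN17} with $k=1,l=n-1$: this yields $Q_E(\EXACT_{1,n-1}^n)\ge \max\{n-1,\,n-1\}-1 = n-2$. Equality follows, giving $Q_E(\EXACT_{1,n-1}^n)=n-2=\max\{n-k,l\}-1$. Here one should check $l-k = n-2\ge 2$, i.e. $n\ge 4$, so the regime is consistent with the blanket hypothesis $l-k\ge 2$ of Conjecture~\ref{con:kl} and with the hypothesis of Lemma~\ref{lemma:klone}.

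There is essentially no hard step here: the whole content of Theorem~\ref{th:kl} is already contained in Lemmas~\ref{lemma:klzero} and~\ref{lemma:klone} together with the external lower bound, and the only thing to verify is the arithmetic simplification of $\max\{n-k,l\}-1$ in each regime. If anything is delicate, it is making sure the lower bound of~\cite{AIN17} is stated for exactly these functions and that the small cases ($n=4,5$ in the second regime, $l=n$ in the first) are not excluded by the cited results; this is routine bookkeeping rather than a genuine obstacle. Thus the proof is a short two-paragraph assembly of previously established facts.
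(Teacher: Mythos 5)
Your proposal is correct and matches the paper's own argument: the paper proves Theorem \ref{th:kl} in exactly this way, combining the upper bounds of Lemmas \ref{lemma:klzero} and \ref{lemma:klone} with the lower bound $Q_E(\EXACT_{k,l}^n) \ge \max\inbrace{n-k,l}-1$ from \cite{AIN17}. The only difference is that you spell out the arithmetic simplifications ($n-1$ for $k=0$ and $n-2$ for $k=1,l=n-1$), which the paper leaves implicit.
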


\section{Conclusion}\label{sec:con}

In this paper, we have characterized the exact quantum query complexity of $\Mod_m^n$ for any $1 < m \le n$. As a corollary, we have shown a broad class of symmetric functions is not evasive in the quantum model. Additionally, we have given the tight exact quantum query complexity of $\EXACT_{k,l}^n$ for some cases. Furthermore, there are some open questions worth exploring.

\begin{itemize}
\item For total symmetric Boolean functions, there are still some basic function classes whose quantum exact query complexity has not been fully characterized.  It would be interesting to investigate whether the techniques used in this article can be extended to these functions.

\item How to give a complete characterization to the class of symmetric functions that map from $\B^n$ to a finite set $X$, whose quantum query complexity is less than $n$?

\end{itemize}

The study of the exact quantum query complexity of symmetric functions is an important area of research in quantum computing. While the exact quantum query complexities of a few symmetric functions are well-established, there remain many challenges in this domain. Further research is necessary to enhance our understanding of the exact quantum query complexity of symmetric functions and to explore new quantum algorithms in this field.

\section*{Declaration of competing interest}
The authors declare that they have no known competing financial interests or personal relationships that could have
appeared to influence the work reported in this paper.

\section*{Acknowledgment} 
We would like to thank 
Shabnam Ghalichi, Lvzhou Li, Jingquan Luo and Maris Ozols for pointing out flaws in Algorithm \ref{algorithm1} in early versions of this manuscript. This research was supported by National Natural Science Foundation of China (Grant No. 62332009, 12347104, 61972191) and Innovation Program for
Quantum Science and Technology (Grant No. 2021ZD0302901).

\bibliography{ref}

\end{document}